\documentclass[11pt]{article}

\usepackage[margin=1in]{geometry}
\usepackage{amsmath,amssymb,amsthm,mathtools}
\usepackage{algorithm}
\usepackage{algpseudocode}
\usepackage{booktabs}
\usepackage{enumitem}
\usepackage[hidelinks]{hyperref}
\usepackage[font=small,labelfont=bf]{caption}

\newtheorem{theorem}{Theorem}
\newtheorem{lemma}[theorem]{Lemma}

\newtheorem{definition}[theorem]{Definition}

\newcommand{\OPT}{\operatorname{OPT}}

\newcommand{\R}{\mathbb{R}}
\newcommand{\deltaout}{\delta^+}
\newcommand{\RPMNC}{\textsc{RPMNC}}

\title{Three-Terminal Reachability-Preserving Minimum Node Cut:\\
Planar Hardness and a General-Graph \(O(\sqrt n)\)-Approximation}

\author{Qi Duan, Carnegie Mellon University}
\date{}

\begin{document}

\maketitle

\begin{abstract}
We study the three-terminal reachability-preserving minimum node cut problem
(\RPMNC).  The input is an undirected graph \(G=(V,E)\), nonnegative vertex
weights on nonterminal vertices, two protected terminals \(s_1,s_2\), and a
target terminal \(t\).  The goal is to delete a minimum-weight set of
nonterminal vertices so that \(t\) is disconnected from the protected terminals,
while \(s_1\) and \(s_2\) remain connected.  This problem captures a basic
``separate while preserve'' requirement that arises in biological intervention
design, image analysis with connectivity constraints, and cyber-security attack
graph mitigation, where deleting or blocking a node represents preventing the
corresponding action, state, or biological entity from participating in a harmful
pathway.

We prove two results.  First, the weighted planar version of three-terminal
\RPMNC{} is NP-complete.  The reduction is from \textsc{Independent Set} on
3-regular Hamiltonian planar graphs and uses a one-sided blocker construction.
Second, we give a polynomial-time \(O(\sqrt n)\)-approximation algorithm for
general graphs.  The algorithm is based on an exact path--separator identity,
a directed split-graph representation of rooted vertex separators, and a
root-linear approximation of a monotone submodular separator function.
\end{abstract}

\section{Introduction}

Minimum cut problems are central in graph algorithms and combinatorial
optimization.  The classical \(s\)-\(t\) cut problem asks for the cheapest set
of edges, or vertices, whose removal separates two terminals.  In many
applications, however, separation alone is not enough.  One often wants to
isolate a harmful or unwanted target while preserving the connectivity of a
trusted or functional part of the network.  This leads naturally to
connectivity-preserving or reachability-preserving cut problems.

In this paper we study the following three-terminal node-cut problem.  Given an
undirected graph \(G=(V,E)\), two protected terminals \(s_1,s_2\), and a target
terminal \(t\), delete a minimum-weight set of nonterminal vertices \(X\) such
that, in \(G-X\), the protected terminals remain connected while \(t\) is
disconnected from them.  We call this problem three-terminal
reachability-preserving minimum node cut, abbreviated as \RPMNC.

The problem is motivated by several domains.

\paragraph{Biology and medicine.}
Biological systems are frequently represented as graphs or networks whose nodes
are genes, proteins, metabolites, cell states, or higher-level biological
entities.  In pathway intervention, one may want to block a disease-driving
target while preserving a functional pathway between beneficial or required
entities.  A node deletion can model inhibiting a protein, suppressing a gene,
removing a metabolite, or perturbing an interaction state.  The preservation
constraint is important: an intervention that disconnects the harmful target but
also destroys a required biological route may be clinically or biologically
undesirable.

\paragraph{Image processing and computer vision.}
Graph cuts are widely used in image segmentation.  Pixels, superpixels, or
regions are modeled as graph vertices, and cut objectives separate foreground
from background or separate competing labels.  Standard cuts do not necessarily
preserve connectivity among selected regions.  In segmentation tasks where two
trusted seeds or anatomical landmarks must remain connected while a target
region is separated, a reachability-preserving node cut provides a natural
combinatorial abstraction.

\paragraph{Cyber security and attack graphs.}
Attack graphs encode how an attacker can chain actions, vulnerabilities, and
system states to reach a goal.  In this setting, deleting or blocking a node in
the graph corresponds to deploying a mitigation that prevents the corresponding
attack action or state from occurring.  A reachability-preserving cut asks for a
minimum-cost set of mitigations that blocks the attacker's path to a target
while preserving connectivity among benign, operational, or monitoring states.

\paragraph{Contribution.}
We prove the following two results.

\begin{itemize}[leftmargin=2em]
    \item We show that weighted planar three-terminal \RPMNC{} is NP-complete.
    The reduction is from \textsc{Independent Set} on 3-regular Hamiltonian
    planar graphs, which is NP-complete~\cite{FleischnerSabidussiSarvanov2010}.

    \item We give a polynomial-time \(O(\sqrt n)\)-approximation algorithm for
    three-terminal \RPMNC{} in general graphs.  The algorithm reduces the
    problem to minimizing a rooted vertex-separator function over
    \(s_1\)-\(s_2\) paths.  This separator function becomes a directed rooted
    cut function after vertex splitting and is therefore monotone and
    submodular.  We then use a root-linear approximation, in the spirit of
    the submodular function approximation framework of Goemans, Harvey, Iwata,
    and Mirrokni~\cite{GoemansHarveyIwataMirrokni2009}.
\end{itemize}

\section{Related Work}

The classical max-flow/min-cut theorem gives polynomial-time algorithms for
ordinary \(s\)-\(t\) cut problems~\cite{FordFulkerson1962,AhujaMagnantiOrlin1993}.
Node cuts can be reduced to edge cuts by the standard vertex-splitting
transformation.

Connectivity-preserving cuts were studied by Duan and Xu under the name
connectivity preserving minimum cut~\cite{DuanXu2014}.  Their work introduced
node-cut and edge-cut variants and established strong hardness results for
general connectivity-preserving node cuts, as well as polynomial-time
algorithms for some planar edge-cut cases.  The present work focuses on the
three-terminal node-cut case with two protected terminals and one target
terminal.

Multiway cut is another important generalization of minimum cut.  Dahlhaus et
al.~\cite{DahlhausJohnsonPapadimitriouSeymourYannakakis1994} showed that
multiway cut becomes NP-hard for three terminals.  Planar multiway cut has also
been studied extensively, including approximation schemes for planar
instances~\cite{BateniHajiaghayiKleinMathieu2011}.  However, multiway cut
separates terminals from each other, while \RPMNC{} separates one target from
a protected connected pair.

Vertex separator problems and their approximability have a long history.
Feige, Hajiaghayi, and Lee~\cite{FeigeHajiaghayiLee2008} developed improved
approximation algorithms for minimum-weight vertex separators.  In contrast,
\RPMNC{} is not a balanced separator problem; it is a cut--preserve problem
with a specified terminal pair whose connectivity must survive.

Our approximation algorithm uses standard tools from submodular optimization.
Polymatroids and their greedy structure are classical; see, for example,
Fujishige~\cite{Fujishige2005}.  The root-linear approximation used here is
closely related to the general problem of approximating monotone submodular
functions everywhere~\cite{GoemansHarveyIwataMirrokni2009}.

Graph-cut methods have also been important in computer vision, especially for
image segmentation~\cite{BoykovJolly2001,BoykovKolmogorov2004}.  In biology,
graph and network models are widely used to represent molecular and cellular
systems~\cite{KoutrouliKaratzasPaezEspinoPavlopoulos2020}.  In cyber security,
attack graphs represent chained attacks and are used for security analysis and
mitigation planning~\cite{MellHarang2015}.

\section{Problem Definition}

Let \(G=(V,E)\) be an undirected graph.  Let
\[
s_1,s_2,t\in V
\]
be three distinct terminals.  The terminals are undeletable.  Each nonterminal
vertex
\[
v\in V\setminus\{s_1,s_2,t\}
\]
has a nonnegative weight \(w(v)\).

\begin{definition}[Three-terminal \RPMNC]
A feasible \RPMNC{} solution is a set
\[
X\subseteq V\setminus\{s_1,s_2,t\}
\]
such that, in \(G-X\),
\[
s_1 \leftrightarrow s_2
\qquad\text{and}\qquad
s_1 \not\leftrightarrow t.
\]
Since \(s_1\) and \(s_2\) remain connected, the second condition is equivalent
to saying that \(t\) is disconnected from both protected terminals.  The
objective is to minimize
\[
w(X)=\sum_{v\in X}w(v).
\]
\end{definition}

Equivalently, a feasible solution leaves a connected component \(C\) of
\(G-X\) such that
\[
s_1,s_2\in C,
\qquad
t\notin C.
\]

\section{NP-Completeness of Weighted Planar Three-Terminal \RPMNC}
\label{sec:planar-hardness}

We prove that weighted planar three-terminal \RPMNC{} is NP-complete.

\subsection{Source Problem}

We reduce from \textsc{Independent Set} on 3-regular Hamiltonian planar graphs.
This problem is NP-complete~\cite{FleischnerSabidussiSarvanov2010}.

The input is a 3-regular Hamiltonian planar graph
\[
H=(U,F),
\qquad |U|=n,
\]
together with an integer \(K\).  The question is whether \(H\) has an
independent set of size at least \(K\).

Let
\[
C=(u_1,u_2,\ldots,u_n,u_1)
\]
be a Hamiltonian cycle of \(H\), given with a planar embedding.  Set the
\RPMNC{} budget to
\[
B=n-K.
\]
Let
\[
M=n+1.
\]
All auxiliary routing, wall, and connector vertices that should never be
deleted by a budget-\(B\) solution are assigned weight \(M\).

\subsection{Construction}

For every source vertex \(u\in U\), create a selector vertex
\[
o_u
\]
with weight
\[
w(o_u)=1.
\]

For every source edge \(e\in F\) incident with \(u\), create a gate vertex
\[
g_{u,e}
\]
with weight
\[
w(g_{u,e})=0.
\]

The construction contains three terminals
\[
s_1=a,
\qquad
s_2=b,
\qquad
t=z.
\]
There is a connected \(z\)-side rail containing \(z\).  For every
\(u\in U\), add a protected path from the \(z\)-side rail to \(o_u\).  All
internal vertices on these rail and routing paths have weight \(M\).  For
every edge \(e\ni u\), add the edge
\[
o_u g_{u,e}.
\]

For every source edge
\[
e=uv\in F,
\]
create an edge certificate gadget \(D_e\) with two parallel branches:
\[
p_e-g_{u,e}-q_e
\]
and
\[
p_e-g_{v,e}-q_e.
\]
Thus \(D_e\) is traversable after deleting a set \(X\) if and only if at
least one of the two endpoint gates
\[
g_{u,e},\qquad g_{v,e}
\]
survives.

Finally, order the source edges arbitrarily as
\[
e_1,e_2,\ldots,e_m.
\]
Connect the edge gadgets in series from \(a\) to \(b\):
\[
a \to D_{e_1}\to D_{e_2}\to \cdots \to D_{e_m}\to b.
\]
Every internal connector vertex has weight \(M\).  Therefore any
\(a\)-to-\(b\) path through the certificate chain must pass through every edge
gadget \(D_e\).

\subsection{Planarity}

\begin{lemma}
\label{lem:hardness-planarity}
The constructed \RPMNC{} instance is planar and has size polynomial in
\(|H|\).
\end{lemma}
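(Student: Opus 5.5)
We prove the two claims separately; polynomial size is routine and planarity is where the work lies.

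\emph{Size.} We count vertices. There are \(n\) selector vertices \(o_u\). Since \(H\) is 3-regular, \(|F|=3n/2\), so there are \(2|F|=3n\) gate vertices. Each edge gadget \(D_e\) adds the two vertices \(p_e,q_e\). The series chain needs \(O(m)\) connector vertices. We fix the \(z\)-side rail and each protected routing path to have length \(O(n)\), or \(O(n^2)\) if subdivisions are needed. All weights are at most \(M=n+1\), so they have polynomially many bits. The instance therefore has size polynomial in \(|H|\).

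\emph{Planarity.} We build an explicit embedding from the given planar embedding of \(H\) and the Hamiltonian cycle \(C\).

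First, place \(z\) and the rail, and route each \(o_u\) from the rail. Draw \(H\) in the plane. Put each \(o_u\) at the position of \(u\). Put each gate \(g_{u,e}\) on the edge \(e\), close to \(u\), so that the star \(o_u\)–\(\{g_{u,e}\}_{e\ni u}\) follows the local rotation at \(u\). The cycle \(C\) is a Jordan curve. Each non-cycle edge of \(H\) lies entirely inside or entirely outside it. We run the \(z\)-side rail as a curve parallel to \(C\), hugging it on one side. Each protected path from the rail to \(o_u\) is then a short segment that crosses no edge of \(H\).

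Second, route the certificate chain. For each \(e=uv\), the gadget \(D_e\) is a 4-cycle \(p_e g_{u,e} q_e g_{v,e}\). We draw it as a thin region around the drawing of \(e\), joining the two gates placed on \(e\). The series connections \(D_{e_1}\to D_{e_2}\to\cdots\) must then link consecutive gadgets through connector paths of weight \(M\) without crossing the rail, the protected paths, or other gadgets.

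Third, fix the order of the edges. The edge order is arbitrary in the construction, so we choose it to make the chain planar. One route is a walk along \(C\): for each \(i\), traverse the chord gadgets incident to \(u_i\) in rotation order, together with the cycle edge \(u_iu_{i+1}\). The outside chords and inside chords are each a set of noncrossing chords of \(C\), that is, outerplanar. A chain can therefore weave between faces by passing through the gadgets themselves. This works because a gadget's \(p_e\) and \(q_e\) can sit on opposite sides of \(e\), so the chain crosses \(e\) exactly there.

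The main obstacle is this third step: a single \(a\)–\(b\) path must visit every \(D_e\) exactly once, none of the \(o_u\) rail connectors may cross it, and each \(o_u\) also touches gates that lie on the chain. If a direct drawing fails, we fall back on subdividing with weight-\(M\) vertices and choosing the edge order by an Euler-tour style traversal of the faces of \(H\), in the dual. We then check crossings face by face. The argument uses that every face boundary of a planar graph is a closed walk, and that the rail can be brought to each \(o_u\) through a face-adjacent gap.
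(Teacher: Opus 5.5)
\textbf{Gap: the construction is never planar.} Your size count is fine. The planarity half, however, cannot be completed, because the constructed graph is never planar. The problem therefore lies in the statement itself, and in the paper's own drawing argument, not in your third step.

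To see this, contract the rail together with all routing paths into a single vertex \(Z\). This set is connected, consists of weight-\(M\) vertices, and is adjacent to every selector. Next contract each star \(\{o_u\}\cup\{g_{u,e}: e\ni u\}\) into one vertex. For each \(e=uv\), contract \(p_e\) into the star of \(u\); since \(p_e\) is adjacent to both \(g_{u,e}\) and \(g_{v,e}\), this yields the edge \(uv\). The result is \(H\) plus a vertex adjacent to all of \(U\).

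A graph plus a universal vertex is planar only if the graph is outerplanar: delete the apex, and every vertex lies on the face it occupied. A 3-regular graph is not outerplanar, since outerplanar graphs always have a vertex of degree at most \(2\). For \(H=K_4\), which is 3-regular, planar and Hamiltonian, the minor is literally \(K_5\).

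This obstruction uses neither the \(q_e\) nor the series connectors. So no choice of the order \(e_1,\dots,e_m\), no subdivision by weight-\(M\) vertices, and no dual Euler-tour traversal can repair it. Those are exactly the ingredients of your fallback.

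\textbf{Where your drawing breaks.} The forced crossings already appear in your first step. Your claim that each routing path is a short segment crossing no edge of \(H\) is true. It overlooks, though, that the gadget \(D_e\) of any chord on the rail's side of \(C\) must get past the rail, or one of the short routing segments, to reach its other endpoint.

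Such chords exist on both sides of \(C\). The non-cycle edges form a perfect matching of \(U\). If they all lay on one side, they would be a noncrossing perfect matching of chords of \(C\). Such a matching always contains a chord joining two cycle-consecutive vertices, which would be a parallel edge in the simple graph \(H\).

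\textbf{The paper's proof has the same flaw.} It glosses over these same crossings when it draws the rail ``along one side of the Hamiltonian corridor'' and appeals to an unspecified ``planar connector tree.'' Planar hardness would require changing the reduction itself, for instance so that the \(z\)-side does not reach every selector through one connected undeletable structure, or by adding crossover gadgets. The lemma as stated has no correct proof.
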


\begin{proof}
Start with the given planar Hamiltonian embedding of \(H\).  Place small
pairwise disjoint vertex boxes for \(u_1,\ldots,u_n\) along the Hamiltonian
cycle.  The selector \(o_u\) is placed inside the box for \(u\).  The
\(z\)-side rail is drawn along one side of the Hamiltonian corridor, and the
protected paths from this rail to the selectors are drawn inside the
corresponding vertex boxes.

For each source edge \(e=uv\), use a sufficiently thin neighborhood of the
embedded edge \(e\) in the planar drawing of \(H\).  Place the gate
\(g_{u,e}\) near the \(u\)-end of this corridor and the gate \(g_{v,e}\) near
the \(v\)-end.  The two branches
\[
p_e-g_{u,e}-q_e
\qquad\text{and}\qquad
p_e-g_{v,e}-q_e
\]
are drawn inside this thin corridor.  Since the original edge corridors of
the planar embedding of \(H\) are pairwise noncrossing except at their
endpoints, these certificate gadgets can be drawn without crossings.

The serial connectors between edge gadgets are drawn using a thin regular
neighborhood of a planar connector tree in the complement of the already drawn
local gadget interiors.  The lanes realize the serial order
\[
D_{e_1},D_{e_2},\ldots,D_{e_m}
\]
without creating graph-theoretic shortcuts: only the intended consecutive
connector edges are included in the graph.  Therefore the constructed graph is
planar and has polynomial size.
\end{proof}

\subsection{Correctness of the Reduction}

Suppose \(H\) has an independent set
\[
S\subseteq U
\]
with
\[
|S|\ge K.
\]
Construct a deletion set \(X_S\) as follows.  If
\[
u\notin S,
\]
delete the selector \(o_u\).  If
\[
u\in S,
\]
delete all gates incident with \(u\):
\[
g_{u,e}\in X_S
\qquad
\text{for every } e\ni u.
\]
The weight is
\[
w(X_S)
=
|\{u\in U:u\notin S\}|
=
n-|S|
\le
n-K
=
B,
\]
because all gates have weight zero.

For each vertex gadget \(u\), all \(z\)-to-certificate channels through that
gadget are blocked.  If \(u\notin S\), the selector \(o_u\) is deleted.  If
\(u\in S\), then all gates \(g_{u,e}\), \(e\ni u\), are deleted.  Hence no
route from the \(z\)-side rail can enter the certificate chain.

Now consider any source edge
\[
e=uv.
\]
Because \(S\) is independent, at least one endpoint of \(e\) is not in \(S\).
Assume without loss of generality that \(u\notin S\).  Then \(o_u\) is
deleted, but the gate \(g_{u,e}\) is not deleted.  Hence the branch
\[
p_e-g_{u,e}-q_e
\]
survives in \(D_e\).  Therefore every edge gadget in the serial chain is
traversable, and \(a\) is connected to \(b\) in \(G-X_S\).  Thus \(X_S\) is
feasible.

Conversely, suppose \(X\) is a feasible \RPMNC{} solution with
\[
w(X)\le B=n-K.
\]
Since all high-weight auxiliary vertices have weight
\[
M=n+1>B,
\]
the set \(X\) contains no high-weight rail, routing, wall, or connector
vertex.  Define
\[
S=\{u\in U:o_u\notin X\}.
\]

We first prove the blocker property.  Let \(u\in S\).  Then
\[
o_u\notin X.
\]
We claim that every incident gate of \(u\) must be deleted:
\[
g_{u,e}\in X
\qquad
\text{for every } e\ni u.
\]
Suppose not.  Then for some edge \(e\ni u\),
\[
g_{u,e}\notin X.
\]
Since \(X\) is feasible, \(a\) and \(b\) are connected in \(G-X\).  The
certificate graph is a serial chain, so every edge gadget must be traversable.
In particular, \(D_e\) is traversable.  Since \(g_{u,e}\notin X\), the branch
\[
p_e-g_{u,e}-q_e
\]
survives, and \(g_{u,e}\) lies in the surviving \(a\)-to-\(b\) component.

On the other hand, \(z\) is connected to \(o_u\) through the surviving
\(z\)-side rail, and \(o_u\) is adjacent to \(g_{u,e}\).  Since neither
\(o_u\) nor \(g_{u,e}\) is deleted, this gives a path from \(z\) to the
\(a\)-component in \(G-X\), contradicting feasibility.  Therefore all gates
incident with every \(u\in S\) are deleted.

Only selectors have positive weight among the low-weight vertices.  Hence
\[
w(X)\ge |\{u\in U:o_u\in X\}|.
\]
Therefore
\[
|\{u\in U:o_u\in X\}|
\le
w(X)
\le
n-K.
\]
Since \(S=\{u:o_u\notin X\}\), we get
\[
|S|
=
n-|\{u:o_u\in X\}|
\ge K.
\]

It remains to prove that \(S\) is independent.  Suppose, for contradiction,
that there is an edge
\[
e=uv\in F
\]
with
\[
u\in S
\qquad\text{and}\qquad
v\in S.
\]
By the blocker property,
\[
g_{u,e}\in X
\qquad\text{and}\qquad
g_{v,e}\in X.
\]
Both branches of \(D_e\) are blocked, so \(D_e\) is not traversable.  Since
the certificate chain is serial, this disconnects \(a\) from \(b\),
contradicting feasibility.  Thus \(S\) is an independent set of size at least
\(K\).

\begin{theorem}
\label{thm:planar-rpmnc-npcomplete}
The decision version of weighted planar three-terminal \RPMNC{} with
nonnegative vertex weights is NP-complete.
\end{theorem}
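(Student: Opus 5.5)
The plan is to assemble the pieces already developed in this section, in the two standard halves: membership in NP and NP-hardness.

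For membership, the certificate is the deletion set \(X\subseteq V\setminus\{s_1,s_2,t\}\) itself. Given \(X\), we compute \(w(X)\) and compare it with the budget. We then run a single graph search (BFS or DFS) from \(s_1\) in \(G-X\) and check that \(s_2\) is reached while \(t\) is not. Both checks take time polynomial in the input size, with weights encoded in binary. Planarity is only a promise on the input and does not affect the verifier.

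For hardness, we reduce from \textsc{Independent Set} on 3-regular Hamiltonian planar graphs, which is NP-complete~\cite{FleischnerSabidussiSarvanov2010}. Given \((H,K)\), we first compute a Hamiltonian cycle \(C\) and a planar embedding, and then build the instance with budget \(B=n-K\) and heavy weight \(M=n+1\). Lemma~\ref{lem:hardness-planarity} gives planarity and polynomial size. We then invoke the two directions of the correctness argument above. If \(S\) is independent with \(|S|\ge K\), then \(X_S\) is feasible with \(w(X_S)\le B\). Conversely, any feasible \(X\) with \(w(X)\le B\) avoids all weight-\(M\) vertices. The blocker property then shows that \(S=\{u:o_u\notin X\}\) is independent and has size at least \(K\). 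Hence \((H,K)\) is a yes-instance exactly when the constructed instance admits a feasible solution of weight at most \(B\).

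The main obstacle is the phrase ``given with a planar embedding.'' Finding a Hamiltonian cycle is itself NP-hard in general, so the reduction cannot compute \(C\) unless the source problem supplies it. My first step is to check that the hardness result of~\cite{FleischnerSabidussiSarvanov2010} holds when the Hamiltonian cycle is part of the input; I expect it does, since such constructions produce the cycle explicitly.

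Failing that, I would avoid \(C\) altogether. The construction only uses \(C\) to place the \(z\)-side rail in the drawing, and the correctness argument never uses \(C\). So I would draw the rail as a tree in the planar embedding of \(H\), which any linear-time planarity algorithm provides. Concretely, I would route the rail through faces to reach each vertex box \(o_u\), with all rail vertices having weight \(M\). I would still need to check carefully that these rail routes do not create unintended adjacencies into the certificate chain. This is the step where the planarity lemma's informal corridor argument would need the most care.
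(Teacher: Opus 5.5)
Your outline is the paper's own proof. NP membership comes from a weight check plus one graph search. Hardness comes from the \textsc{Independent Set} reduction, with planarity taken from Lemma~\ref{lem:hardness-planarity} and the two correctness directions ($X_S$ is feasible, and the blocker property makes $\{u:o_u\notin X\}$ independent). The correctness directions are sound. The genuine gap is the planarity step you import as a black box: the constructed graph is not planar for any simple cubic $H$ and any edge ordering $e_1,\ldots,e_m$ (note $m=3n/2\ge 6$). So the reduction, as stated here and in the paper, only gives hardness on general graphs.

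Here is an explicit $K_{3,3}$ minor.
\begin{itemize}
\item Write $e_1=xy$ and $e_2=uv$. Since $e_1\neq e_2$, we may assume $x\notin\{u,v\}$, so some $e_k$ with $k\ge 3$ is incident with $x$.
\item On one side take three branch sets: $A_1$, the rail together with the internal vertices of all protected paths (connected, containing $z$, adjacent to every selector); $A_2$, consisting of $q_{e_1}$, $p_{e_2}$ and the connector between them; and $A_3$, consisting of $q_{e_2}$, $p_{e_3}$ and their connector.
\item On the other side take $g_{u,e_2}$, $g_{v,e_2}$ and $o_x$.
\item Each gate of $D_{e_2}$ is adjacent to $p_{e_2}\in A_2$ and $q_{e_2}\in A_3$. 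It reaches $A_1$ through its own selector, $o_u$ or $o_v$.
\item The vertex $o_x$ is adjacent to $A_1$. It reaches $A_2$ through $g_{x,e_1}$, which is adjacent to $q_{e_1}$. It reaches $A_3$ through $g_{x,e_k}$ and then back along the chain through $D_{e_{k-1}},\ldots,D_{e_3}$ (an empty segment if $k=3$).
\end{itemize}
These connections are internally disjoint, so $K_{3,3}$ is a minor.

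The obstruction uses only two facts: the $z$-side is connected and touches every selector, and every certificate gadget lies on one serial $a$--$b$ chain. Hence your fallback of re-routing the rail as a tree through faces of $H$ cannot repair it. Neither can obtaining the Hamiltonian cycle from the source problem. That Hamiltonian-cycle concern is legitimate but secondary, since, as you note, correctness never uses $C$. The informal corridor argument in the lemma fails precisely where the serial connectors must pass the rail-and-selector structure. A valid proof of planar hardness needs a genuinely different planar construction, for example a crossing mechanism or a different arrangement of the chain and the $z$-side. Neither your proposal nor the cited lemma supplies one.
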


\begin{proof}
The reduction above is polynomial and preserves planarity.  It shows that
\[
H \text{ has an independent set of size at least } K
\]
if and only if the constructed planar \RPMNC{} instance has a feasible
solution of weight at most
\[
B=n-K.
\]
Thus the problem is NP-hard.

The problem is in NP because, given a deletion set \(X\), one can check in
polynomial time whether
\[
s_1\leftrightarrow s_2
\qquad\text{and}\qquad
s_1\not\leftrightarrow t
\]
hold in \(G-X\), and whether \(w(X)\le B\).  Hence the decision version is
NP-complete.
\end{proof}

\section{An \(O(\sqrt n)\)-Approximation for General Graphs}
\label{sec:approx}

In this section we prove that three-terminal undirected \RPMNC{} admits a
polynomial-time \(O(\sqrt n)\)-approximation in general graphs.

For a set
\[
A\subseteq V\setminus\{t\},
\]
define
\[
f_N(A)
=
\min
\left\{
w(X):
X\subseteq V\setminus(A\cup\{t\}),
\ X \text{ separates } t \text{ from every vertex of } A
\right\}.
\]
If no such separator exists, set
\[
f_N(A)=+\infty.
\]

\begin{lemma}[Path--separator identity]
\label{lem:path-separator}
For every feasible three-terminal \RPMNC{} instance,
\[
\OPT_{\RPMNC}
=
\min_{P:s_1\leadsto s_2,\ t\notin V(P)}
f_N(V(P)),
\]
where the minimum is over all \(s_1\)-to-\(s_2\) paths avoiding \(t\).
\end{lemma}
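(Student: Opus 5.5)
We prove the identity by establishing the two inequalities separately. Throughout, \(X\) ranges over subsets of nonterminal vertices. Note that \(f_N(V(P))\) only allows separators disjoint from \(V(P)\cup\{t\}\). Since \(s_1,s_2\in V(P)\) are terminals, these separators automatically avoid \(s_1,s_2\), so they are admissible \RPMNC{} deletion sets.

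\textbf{Upper bound (\(\le\)).} Fix any \(s_1\)-to-\(s_2\) path \(P\) with \(t\notin V(P)\). If \(f_N(V(P))=+\infty\) there is nothing to show, so let \(X\) attain \(f_N(V(P))\). Then \(X\cap V(P)=\emptyset\), so \(P\) survives in \(G-X\) and \(s_1\leftrightarrow s_2\). Moreover \(X\) separates \(t\) from every vertex of \(V(P)\), in particular from \(s_1\). Hence \(X\) is feasible for \RPMNC, and \(\OPT_{\RPMNC}\le w(X)=f_N(V(P))\). Minimizing over \(P\) gives the first inequality.

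\textbf{Lower bound (\(\ge\)).} Let \(X^*\) be an optimal \RPMNC{} solution; the instance is feasible, so one exists. In \(G-X^*\) the vertices \(s_1\) and \(s_2\) lie in a common component \(C\) with \(t\notin C\). Choose a simple \(s_1\)-to-\(s_2\) path \(P\) inside \(C\). Then \(t\notin V(P)\) and \(V(P)\cap X^*=\emptyset\), so \(X^*\subseteq V\setminus(V(P)\cup\{t\})\). Every vertex of \(V(P)\) lies in \(C\), which does not contain \(t\), so \(X^*\) separates \(t\) from every vertex of \(V(P)\). Thus \(X^*\) is admissible in the definition of \(f_N(V(P))\), giving \(f_N(V(P))\le w(X^*)=\OPT_{\RPMNC}\). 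Hence the minimum over paths is at most \(\OPT_{\RPMNC}\).

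There is no serious obstacle here; the argument is essentially bookkeeping. The point needing care is the disjointness requirement \(X\subseteq V\setminus(A\cup\{t\})\) in \(f_N\). In the lower bound it holds because \(P\) is chosen to avoid \(X^*\). In the upper bound it is what guarantees that \(P\) survives the deletion. We should also note that the minimum on the right is attained and finite: there are finitely many paths, and the lower-bound direction exhibits at least one path with finite \(f_N\). Finally, if \(t\) is adjacent to a vertex of \(P\), then \(f_N(V(P))=+\infty\), because \(t\) and its neighbor on \(P\) can be separated only by deleting one of them, and neither is allowed. This case is harmless, since the upper-bound direction simply skips such paths.
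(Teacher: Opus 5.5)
Your proposal is correct and follows the paper's proof. The upper bound uses a minimizer of \(f_N(V(P))\), which is disjoint from \(V(P)\), so \(P\) survives. The lower bound takes an \(s_1\)-to-\(s_2\) path inside the component of \(G-X^\star\) that contains the protected terminals. Your extra remarks are accurate bookkeeping that the paper leaves implicit: such separators automatically avoid the terminals, \(f_N(V(P))=+\infty\) when \(t\) is adjacent to a vertex of \(P\), and the minimum is attained.
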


\begin{proof}
Let \(X^\star\) be an optimal \RPMNC{} solution.  In \(G-X^\star\), let
\(C^\star\) be the connected component containing \(s_1\) and \(s_2\).
Then \(t\notin C^\star\).  Since \(C^\star\) is connected, it contains an
\(s_1\)-to-\(s_2\) path \(P^\star\), and this path avoids \(t\).

Because \(X^\star\) separates \(t\) from the whole component \(C^\star\), it
separates \(t\) from every vertex of \(V(P^\star)\).  Hence
\[
f_N(V(P^\star))
\le
w(X^\star)
=
\OPT_{\RPMNC}.
\]

Conversely, let \(P\) be any \(s_1\)-to-\(s_2\) path avoiding \(t\), and let
\(X\) realize \(f_N(V(P))\).  Since
\[
X\subseteq V\setminus(V(P)\cup\{t\}),
\]
the path \(P\) remains intact in \(G-X\), so \(s_1\) and \(s_2\) remain
connected.  Since \(X\) separates \(t\) from every vertex of \(V(P)\), it
separates \(t\) from both protected terminals.  Therefore \(X\) is feasible.
Taking the minimum over all such paths proves the identity.
\end{proof}

\subsection{Directed Split-Graph Representation}

Construct a directed split graph \(D\) as follows.  For every vertex
\(v\in V\), create two copies \(v^{-}\) and \(v^{+}\).  Add an arc
\[
v^{-}\to v^{+}.
\]
If \(v\) is deletable, this arc has capacity \(w(v)\).  If
\[
v\in\{s_1,s_2,t\},
\]
this arc has capacity \(+\infty\).  For every undirected edge
\[
\{u,v\}\in E,
\]
add two infinite-capacity arcs
\[
u^{+}\to v^{-}
\qquad\text{and}\qquad
v^{+}\to u^{-}.
\]
Let the root be
\[
r=t^{+}.
\]
For
\[
A\subseteq V\setminus\{t\},
\]
define
\[
A^{-}=\{a^{-}:a\in A\}.
\]

\begin{lemma}[Split-graph equivalence]
\label{lem:split-graph}
For every
\[
A\subseteq V\setminus\{t\},
\]
we have
\[
f_N(A)
=
\min
\left\{
c_D(\deltaout_D(S)):
r\in S,\ S\cap A^{-}=\emptyset
\right\}.
\]
\end{lemma}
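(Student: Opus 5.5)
We prove the identity by two inequalities, converting between vertex separators in \(G\) and finite-capacity rooted cuts in \(D\). Two conventions are needed. A cut \(S\) with \(c_D(\deltaout_D(S))<\infty\) can cross only arcs \(v^-\to v^+\) with \(v\) deletable. Both sides are \(+\infty\) exactly when no finite object exists, and the two directions below show that finite objects correspond. If \(A\) contains a vertex adjacent to \(t\), both sides are \(+\infty\): \(t^+\to a^-\) is an infinite arc, so \(a^-\) cannot be excluded from \(S\), and no vertex separator exists either.

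For \(\ge\), let \(X\subseteq V\setminus(A\cup\{t\})\) separate \(t\) from every vertex of \(A\), with \(w(X)=f_N(A)\). Let \(R\) be the vertex set of the component of \(t\) in \(G-X\), so \(R\cap A=\emptyset\). Define
\[
S=\{v^-,v^+: v\in R\}\cup\{x^-: x\in X\}.
\]
Then \(r=t^+\in S\). For \(a\in A\) we have \(a\notin R\) and \(a\notin X\), so \(S\cap A^-=\emptyset\). We then list the arcs leaving \(S\).
\begin{itemize}[leftmargin=2em]
\item Split arcs \(v^-\to v^+\) with \(v\in R\) stay inside \(S\).
\item Split arcs \(x^-\to x^+\) with \(x\in X\) leave \(S\) and contribute \(w(x)\). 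Here \(x\) is deletable because \(X\) avoids the terminals: \(t\) is excluded by definition, and \(s_1,s_2\) are undeletable, which we take as built into \(f_N\) since \(X\) ranges over deletable vertices.
\item Edge arcs \(u^+\to v^-\) with \(u\in R\) satisfy \(v\in R\cup X\), because \(R\) is a component of \(G-X\). Hence \(v^-\in S\).
\item Vertices \(x^-\) have no outgoing edge arcs.
\end{itemize}
So \(c_D(\deltaout_D(S))=w(X)\).

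For \(\le\), take \(S\) of finite capacity with \(r\in S\) and \(S\cap A^-=\emptyset\). Set
\[
X=\{v: v^-\in S,\ v^+\notin S\}.
\]
Its capacity is at least \(w(X)\), and finiteness forces every vertex of \(X\) to be deletable. No \(a\in A\) lies in \(X\), since \(a^-\notin S\). Also \(t\notin X\), since \(t^+\in S\) and the split arc of \(t\) has infinite capacity.

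The remaining step, and the only mild obstacle, is to show that \(X\) separates \(t\) from \(A\). Suppose \(t=v_0,v_1,\dots,v_k=a\in A\) is a path in \(G-X\). We prove by induction that \(v_i^+\in S\) for all \(i\).
\begin{itemize}[leftmargin=2em]
\item The base case holds because \(t^+=r\in S\).
\item If \(v_i^+\in S\), then the infinite arc \(v_i^+\to v_{i+1}^-\) forces \(v_{i+1}^-\in S\).
\item Since \(v_{i+1}\notin X\), the definition of \(X\) then gives \(v_{i+1}^+\in S\).
\end{itemize}
Applying the step at \(i=k-1\) gives \(a^-\in S\), which contradicts \(S\cap A^-=\emptyset\). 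Hence \(f_N(A)\le w(X)\le c_D(\deltaout_D(S))\), and the two inequalities give the identity.
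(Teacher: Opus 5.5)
Your proof is correct and uses the same correspondence as the paper: finite rooted cuts can cross only split arcs \(v^-\to v^+\) of deletable vertices, and \(t\)-to-\(A\) paths in \(G-X\) correspond to \(r\)-to-\(A^-\) paths in \(D\) once the split arcs of \(X\) are removed. Your two inequalities spell out what the paper states in one line each: the explicit cut \(S\) built from the component of \(t\) together with \(X^-\), and the induction along a path. Your convention that \(X\) avoids \(s_1,s_2\) matches the paper's implicit reading of \(f_N\).
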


\begin{proof}
A finite \(r\)-to-\(A^{-}\) cut in \(D\) can use only arcs of the form
\[
v^{-}\to v^{+}
\]
corresponding to deletable vertices \(v\).  All arcs encoding adjacency and
all terminal-splitting arcs have infinite capacity.  Hence every finite
directed cut corresponds to a set
\[
X\subseteq V\setminus(A\cup\{t\})
\]
of deleted nonterminal vertices, with the same total weight.

A path from \(t\) to a vertex \(a\in A\) in \(G-X\) corresponds exactly to a
directed path from \(t^{+}\) to \(a^{-}\) in \(D\) after removing the arcs
\[
x^{-}\to x^{+}
\qquad
(x\in X).
\]
Thus the directed cut separates \(r=t^{+}\) from \(A^{-}\) if and only if
\(X\) separates \(t\) from every vertex of \(A\) in the original graph.
\end{proof}

\begin{lemma}[Monotonicity and submodularity]
\label{lem:submodular}
The function \(f_N\) is normalized, monotone, and submodular on
\(V\setminus\{t\}\).
\end{lemma}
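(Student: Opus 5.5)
We would prove all three properties from the split-graph representation of Lemma~\ref{lem:split-graph}, which writes \(f_N(A)\) as the minimum, over vertex sets \(S\) of \(D\) with \(r\in S\) and \(S\cap A^{-}=\emptyset\), of the directed cut value \(\kappa(S)=c_D(\deltaout_D(S))\). We allow values in \(\R_{\ge 0}\cup\{+\infty\}\) and use the convention that any sum containing \(+\infty\) equals \(+\infty\). The two ingredients are that the feasibility conditions interact well with union and intersection of sets \(S\), and that \(\kappa\) is a submodular set function on the vertex set of \(D\).

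\emph{Normalized.} Take \(A=\emptyset\) and \(S=V(D)\). Then \(r\in S\), the set \(\deltaout_D(S)\) is empty, and so \(f_N(\emptyset)=0\). The direct reading is also immediate: the empty deletion set separates \(t\) from every vertex of the empty set.

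\emph{Monotone.} Let \(A\subseteq B\subseteq V\setminus\{t\}\). Then \(A^{-}\subseteq B^{-}\), so every \(S\) that is feasible for \(B\) (that is, \(r\in S\) and \(S\cap B^{-}=\emptyset\)) is also feasible for \(A\). The minimum for \(A\) is over a larger family, hence \(f_N(A)\le f_N(B)\). This holds also when the feasible family for \(B\) is empty or all of its cuts are infinite, in which case \(f_N(B)=+\infty\).

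\emph{Submodular.} Let \(A,B\subseteq V\setminus\{t\}\). If \(f_N(A)\) or \(f_N(B)\) is \(+\infty\), the inequality \(f_N(A)+f_N(B)\ge f_N(A\cup B)+f_N(A\cap B)\) holds trivially. Otherwise choose minimizers \(S_A\) and \(S_B\).
\begin{itemize}[leftmargin=2em]
\item \(S_A\cap S_B\) contains \(r\) and misses \(A^{-}\cup B^{-}=(A\cup B)^{-}\), so it is feasible for \(A\cup B\).
\item \(S_A\cup S_B\) contains \(r\) and misses \((A\cap B)^{-}=A^{-}\cap B^{-}\): a vertex of \(A^{-}\cap B^{-}\) lies outside both \(S_A\) and \(S_B\).
\end{itemize}
It remains to invoke the submodularity of the directed cut function with nonnegative, possibly infinite, capacities:
\[
\kappa(S_A)+\kappa(S_B)\ \ge\ \kappa(S_A\cup S_B)+\kappa(S_A\cap S_B).
\]
This follows from the standard arc-by-arc count. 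Each arc \(x\to y\) contributes at least as much to the left side as to the right side. The only arcs lost on the right are those going between \(S_A\setminus S_B\) and \(S_B\setminus S_A\), and these are counted on the left but not on the right.

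Since the left side is finite, both terms on the right are finite. Each term is at least the corresponding value of \(f_N\), because the sets are feasible for \(A\cup B\) and \(A\cap B\) respectively. Chaining the inequalities gives the claim.

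The only step needing care is the treatment of infinite values. Infinite capacities could in principle make \(\kappa\) infinite on the right side while the left side is finite. The arc-wise argument rules this out: every arc leaving \(S_A\cup S_B\) or \(S_A\cap S_B\) also leaves \(S_A\) or \(S_B\), so an infinite term on the right forces an infinite term on the left. The remaining steps are routine.
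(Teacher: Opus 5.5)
Your proof is correct and follows the paper's route. Via the split-graph representation, you show that \(S_A\cap S_B\) is feasible for \(A\cup B\) and \(S_A\cup S_B\) is feasible for \(A\cap B\), then invoke submodularity of the directed cut function. You also spell out normalization, monotonicity, and the handling of \(+\infty\) values (including the arc-wise check that finiteness on the left forces finiteness on the right), which the paper leaves implicit.
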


\begin{proof}
Normalization and monotonicity are immediate.  For submodularity, let \(S_A\)
and \(S_B\) be optimal source-side sets for \(A\) and \(B\), respectively.
Then \(S_A\cap S_B\) is feasible for \(A\cup B\), and \(S_A\cup S_B\) is
feasible for \(A\cap B\).  Since directed cut capacity is submodular,
\[
c_D(\deltaout_D(S_A))
+
c_D(\deltaout_D(S_B))
\ge
c_D(\deltaout_D(S_A\cap S_B))
+
c_D(\deltaout_D(S_A\cup S_B)).
\]
The claim follows.
\end{proof}

\subsection{Root-Linear Approximation}

Let
\[
R=V\setminus\{t\},
\qquad
m=|R|\le n.
\]
For each
\[
v\in R,
\]
define
\[
\lambda_v=f_N(\{v\}).
\]
Consider the polymatroid
\[
P(f_N)
=
\left\{
y\in\R_{\ge 0}^{R}:
y(A)\le f_N(A)\ \text{for all } A\subseteq R
\right\}.
\]
Let \(y^\star\) be a proportional-fair point of \(P(f_N)\), namely
\[
y^\star
\in
\arg\max_{y\in P(f_N),\ y_v>0\ \forall v}
\sum_{v\in R}\log y_v.
\]
Define
\[
a_v=\lambda_v y^\star_v.
\]

\begin{lemma}[Root-linear approximation]
\label{lem:root-linear}
For every \(A\subseteq R\),
\[
\sqrt{a(A)}
\le
f_N(A)
\le
\sqrt m\,\sqrt{a(A)}.
\]
\end{lemma}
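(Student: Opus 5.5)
The plan is to follow the Goemans--Harvey--Iwata--Mirrokni argument. It uses three ingredients: the greedy characterization of polymatroids, the first-order optimality condition for the proportional-fair point \(y^\star\), and one application of Cauchy--Schwarz.

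Before the main argument I will clean up degenerate coordinates. Let \(R'=\{v\in R:0<\lambda_v<\infty\}\).

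\begin{itemize}[leftmargin=2em]
\item If \(\lambda_v=0\), then \(f_N(\{v\})=0\). Submodularity and monotonicity (Lemma~\ref{lem:submodular}) give \(f_N(A)=f_N(A\setminus\{v\})\). We also set \(a_v=0\), since \(y^\star_v\) cannot be positive there. So such \(v\) can be dropped from both sides.
\item If \(\lambda_v=+\infty\), i.e.\ \(v\) is adjacent to \(t\), then both sides are \(+\infty\) for every \(A\ni v\), under the convention \(a_v=+\infty\). I expect to state this convention explicitly.
\end{itemize}

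After this reduction, \(f_N\) restricted to \(R'\) is a finite normalized monotone submodular function with \(f_N(\{v\})>0\). Hence \(P(f_N)\) contains a strictly positive point, so \(y^\star\) exists. It is the maximizer of a strictly concave function over a compact convex set.

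\emph{Lower bound.} Fix \(A\). Since \(y^\star\in P(f_N)\), we have \(y^\star(A)\le f_N(A)\). Monotonicity gives \(\lambda_v=f_N(\{v\})\le f_N(A)\) for every \(v\in A\). Therefore
\[
a(A)=\sum_{v\in A}\lambda_v y^\star_v\le f_N(A)\,y^\star(A)\le f_N(A)^2 ,
\]
which is the left inequality.

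\emph{Upper bound.} There are two steps.

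\begin{itemize}[leftmargin=2em]
\item Run the greedy algorithm on \(A\) in any order, starting with some \(v_1\in A\), and set the coordinates outside \(A\) to zero. This produces a point \(z\in P(f_N)\) with \(z(A)=f_N(A)\), supported on \(A\). By submodularity each greedy increment is at most the singleton value, so \(z_v\le\lambda_v\) for all \(v\in A\).
\item Apply the first-order optimality condition for maximizing the concave function \(\sum_v\log y_v\) over the convex set \(P(f_N)\) at \(y^\star\), tested against the feasible direction \(z-y^\star\). This gives \(\sum_v (z_v-y^\star_v)/y^\star_v\le 0\), that is, \(\sum_v z_v/y^\star_v\le m\).
\end{itemize}

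Then Cauchy--Schwarz yields
\[
f_N(A)=\sum_{v\in A}\sqrt{z_v/y^\star_v}\,\sqrt{z_v y^\star_v}
\le\Bigl(\sum_{v}\tfrac{z_v}{y^\star_v}\Bigr)^{1/2}\Bigl(\sum_{v\in A}\lambda_v y^\star_v\Bigr)^{1/2}
\le\sqrt m\,\sqrt{a(A)} .
\]

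The main obstacle is not the inequalities themselves; those are short once the two structural facts are in hand. It is justifying the optimality condition rigorously, because \(y^\star\) is an optimum over an open-face-restricted domain. I would handle this by noting that \(\log\) is differentiable on the positive orthant. Since \(y^\star>0\) and \(z\ge 0\), the points \(y^\star+\epsilon(z-y^\star)\) remain positive and lie in \(P(f_N)\) for \(\epsilon\in[0,1)\). The one-sided directional derivative at \(\epsilon=0\) must therefore be \(\le 0\), which is exactly the condition used above.

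A secondary concern is the treatment of infinite and zero \(\lambda_v\) above. I will state those conventions explicitly rather than leave them implicit.
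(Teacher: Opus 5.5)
Your proposal is correct and follows the paper's proof essentially step for step: the lower bound comes from \(y^\star(A)\le f_N(A)\) together with monotonicity, and the upper bound comes from a greedy point \(z\) with \(z_v\le\lambda_v\), the first-order condition \(\sum_v z_v/y^\star_v\le m\), and one Cauchy--Schwarz step. Your additions fill in details the paper leaves implicit: you discard coordinates with \(\lambda_v\in\{0,+\infty\}\), where the paper's \(y^\star\) would not even exist, and you derive the optimality condition along the segment \((1-\epsilon)y^\star+\epsilon z\) (one small correction: \(y^\star\) exists because the objective tends to \(-\infty\) at the boundary of the bounded polytope, since its strictly positive part is not itself compact).
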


\begin{proof}
Since \(y^\star\in P(f_N)\), we have
\[
y^\star(A)\le f_N(A).
\]
Also, by monotonicity,
\[
\lambda_v=f_N(\{v\})\le f_N(A)
\qquad
\text{for every } v\in A.
\]
Therefore
\[
a(A)
=
\sum_{v\in A}\lambda_v y^\star_v
\le
f_N(A)\sum_{v\in A}y^\star_v
=
f_N(A)y^\star(A)
\le
f_N(A)^2.
\]
Taking square roots gives the lower bound.

For the upper bound, the first-order optimality condition for \(y^\star\)
implies that for every \(x\in P(f_N)\),
\[
\sum_{v\in R}\frac{x_v}{y^\star_v}\le m.
\]
By the greedy characterization of polymatroids, for every \(A\subseteq R\)
there exists \(x^A\in P(f_N)\), supported on \(A\), such that
\[
x^A(A)=f_N(A)
\]
and
\[
0\le x^A_v\le \lambda_v.
\]
By Cauchy's inequality,
\[
f_N(A)
=
\sum_{v\in A}x^A_v
\le
\sqrt{a(A)}
\sqrt{
\sum_{v\in A}
\frac{(x^A_v)^2}{\lambda_v y^\star_v}
}.
\]
Since \(x^A_v\le \lambda_v\),
\[
\frac{(x^A_v)^2}{\lambda_v y^\star_v}
\le
\frac{x^A_v}{y^\star_v}.
\]
Thus
\[
f_N(A)
\le
\sqrt{a(A)}
\sqrt{
\sum_{v\in A}
\frac{x^A_v}{y^\star_v}
}
\le
\sqrt{a(A)}\sqrt m.
\]
\end{proof}

\subsection{Separation Oracle}

Given
\[
y\in\R_{\ge 0}^{R},
\]
we need to test whether
\[
y(A)\le f_N(A)
\qquad
\text{for every } A\subseteq R.
\]
Using the split graph \(D\), add a new sink \(q\).  For each \(v\in R\),
add an arc
\[
v^{-}\to q
\]
of capacity \(y_v\).  Let
\[
Y=\sum_{v\in R}y_v.
\]
Then
\[
\max_{A\subseteq R}\{y(A)-f_N(A)\}
=
Y-
\min_{S:r\in S,\ q\notin S}
\left(
c_D(\deltaout_D(S))
+
\sum_{v:v^{-}\in S}y_v
\right).
\]
Thus separation over \(P(f_N)\) reduces to one directed minimum cut.

\subsection{Approximation Algorithm}

\begin{algorithm}[H]
\caption{\(O(\sqrt n)\)-Approximation for Three-Terminal \RPMNC}
\label{alg:rpmnc-sqrtn}
\begin{algorithmic}[1]
\Require Undirected graph \(G=(V,E)\), vertex weights \(w\), terminals \(s_1,s_2,t\).
\Ensure A feasible \RPMNC{} node cut \(X\).
\State Build the directed split graph \(D\).
\State Define the rooted separator function \(f_N\).
\State Compute \(\lambda_v=f_N(\{v\})\) for every relevant \(v\in V\setminus\{t\}\).
\State Compute a proportional-fair point \(y^\star\in P(f_N)\).
\State Set \(a_v=\lambda_v y^\star_v\).
\State Find a shortest \(s_1\)-to-\(s_2\) path \(P\) in \(G-t\) using vertex lengths \(a_v\).
\State Compute a minimum node separator \(X\) separating \(t\) from every vertex of \(V(P)\).
\State \Return \(X\).
\end{algorithmic}
\end{algorithm}

\begin{theorem}
\label{thm:rpmnc-sqrtn}
Three-terminal undirected \RPMNC{} admits a polynomial-time
\(O(\sqrt n)\)-approximation in general graphs.
\end{theorem}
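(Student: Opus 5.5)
We prove the theorem by analyzing Algorithm~\ref{alg:rpmnc-sqrtn}. There are three things to establish: the algorithm runs in polynomial time, its output is feasible, and its cost is at most \(\sqrt m\) times \(\OPT_{\RPMNC}\).

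\textbf{Feasibility.} The path \(P\) found in line~6 avoids \(t\). The set \(X\) returned in line~7 realizes \(f_N(V(P))\). The converse half of the proof of Lemma~\ref{lem:path-separator} then shows that \(X\) is a feasible \RPMNC{} solution. Its cost is \(w(X)=f_N(V(P))\).

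\textbf{Polynomial time.} We argue step by step.
\begin{itemize}[leftmargin=2em]
\item By Lemma~\ref{lem:split-graph}, each value \(\lambda_v=f_N(\{v\})\) is one max-flow computation in \(D\). The final separator in line~7 is also one max-flow computation.
\item The shortest path in line~6 has nonnegative vertex lengths, so Dijkstra's algorithm on the node-weighted graph suffices.
\item The nontrivial step is computing \(y^\star\). We would maximize the concave function \(\sum_v\log y_v\) over \(P(f_N)\) with the ellipsoid method, using the separation oracle of the previous subsection, which costs one min cut per query.
\end{itemize}
This gives \(y^\star\) only approximately. To absorb the error, we use a slightly relaxed optimality condition: \(\sum_v x_v/y_v\le (1+\epsilon)m\) for all \(x\in P(f_N)\). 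This relaxation only changes constants in Lemma~\ref{lem:root-linear}. Vertices with \(\lambda_v=0\) need no special care, since they force \(y_v=0\) and contribute \(a_v=0\). Vertices with \(\lambda_v=+\infty\) are those adjacent to \(t\) or equal to it. They can be excluded, because no feasible path uses them. Making this step rigorous is the main technical obstacle. If exact computation is awkward, we would fall back on the ellipsoid-based construction of Goemans, Harvey, Iwata, and Mirrokni~\cite{GoemansHarveyIwataMirrokni2009}, which gives the same \(O(\sqrt m)\) guarantee up to constants.

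\textbf{Approximation ratio.} Let \(P^\star\) be a path attaining the minimum in Lemma~\ref{lem:path-separator}, so that \(f_N(V(P^\star))=\OPT_{\RPMNC}\). The path \(P\) minimizes \(a(V(P))\) among all \(s_1\)-\(s_2\) paths avoiding \(t\), so \(a(V(P))\le a(V(P^\star))\). Combining this with both inequalities of Lemma~\ref{lem:root-linear} gives
\[
w(X)=f_N(V(P))\le \sqrt m\sqrt{a(V(P))}\le \sqrt m\sqrt{a(V(P^\star))}\le \sqrt m\, f_N(V(P^\star))=\sqrt m\,\OPT_{\RPMNC}.
\]
Since \(m\le n\), the algorithm is an \(O(\sqrt n)\)-approximation. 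The ratio step itself is mechanical once Lemma~\ref{lem:root-linear} is available. The real work lies in justifying the computability of \(y^\star\) and the first-order inequality used inside that lemma.
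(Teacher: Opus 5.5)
Your proposal is correct and follows the paper's proof. Feasibility comes from the converse half of Lemma~\ref{lem:path-separator}, and the ratio comes from $a(V(P))\le a(V(P^\star))$ sandwiched between the two inequalities of Lemma~\ref{lem:root-linear}, exactly as in the paper. Your extra remarks on computing $y^\star$ go beyond the paper, which ignores the issue (and excluding $N(t)$, where $\lambda_v=+\infty$, is genuinely necessary for $y^\star$ to exist); but the Goemans--Harvey--Iwata--Mirrokni fallback only guarantees $O(\sqrt m\log m)$ for general monotone submodular functions, so the approximate proportional-fair route is the one to complete, e.g.\ via $y^\star_v\ge\lambda_v/m$ and strong concavity of $\sum_v\log y_v$ on $\{y\le\lambda\}$.
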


\begin{proof}
Let \(P\) be the path chosen by Algorithm~\ref{alg:rpmnc-sqrtn}.  Let
\(X^\star\) be an optimal \RPMNC{} solution, and let \(P^\star\) be an
\(s_1\)-to-\(s_2\) path contained in the connected component of
\(G-X^\star\) containing \(s_1\) and \(s_2\).  By Lemma~\ref{lem:path-separator},
\[
f_N(V(P^\star))
\le
\OPT_{\RPMNC}.
\]
Since \(P\) is shortest with respect to the modular vertex lengths \(a_v\),
\[
a(V(P))
\le
a(V(P^\star)).
\]
Using Lemma~\ref{lem:root-linear}, we obtain
\[
f_N(V(P))
\le
\sqrt m\sqrt{a(V(P))}
\le
\sqrt m\sqrt{a(V(P^\star))}
\le
\sqrt m\,f_N(V(P^\star)).
\]
Therefore
\[
f_N(V(P))
\le
\sqrt n\,\OPT_{\RPMNC}.
\]
The algorithm outputs a minimum node separator \(X\) separating \(t\) from
every vertex of \(V(P)\).  Since \(X\cap V(P)=\emptyset\), the path \(P\)
remains in \(G-X\), so \(s_1\) and \(s_2\) remain connected.  Since \(X\)
separates \(t\) from every vertex of \(V(P)\), it separates \(t\) from both
protected terminals.  Thus \(X\) is feasible and has weight at most
\[
\sqrt n\,\OPT_{\RPMNC}.
\]
\end{proof}

\section{Discussion}

The NP-completeness proof relies on the NP-completeness of
\textsc{Independent Set} on 3-regular Hamiltonian planar graphs
\cite{FleischnerSabidussiSarvanov2010}.  The approximation proof relies on
standard max-flow/min-cut, vertex splitting, polymatroid, and submodular
approximation tools
\cite{FordFulkerson1962,AhujaMagnantiOrlin1993,Fujishige2005,
GoemansHarveyIwataMirrokni2009}.

To the best of our knowledge, the specific combination proved here---weighted
planar NP-completeness of three-terminal \RPMNC{} and an \(O(\sqrt n)\)
approximation for the same node-cut problem in general graphs---does not appear
as an official published result in the existing connectivity-preserving cut
literature.  The closest official prior work is the JCSS paper of Duan and
Xu~\cite{DuanXu2014}, which studies connectivity-preserving minimum cut
variants.

\section{Conclusion}

We studied the three-terminal reachability-preserving minimum node cut problem.
The problem asks for a minimum-cost set of nonterminal vertices whose deletion
separates a target terminal \(t\) from two protected terminals \(s_1,s_2\),
while preserving the connectivity between the protected terminals.

We proved that the weighted planar version is NP-complete.  This shows that
the node-cut version remains computationally difficult even in planar graphs.
We also gave a polynomial-time \(O(\sqrt n)\)-approximation for general graphs.
The approximation algorithm is based on an exact path--separator identity and a
root-linear approximation of a monotone submodular rooted vertex-separator
function.

Several directions remain open.  The most immediate is to improve the
approximation ratio for planar graphs.  Another direction is to study
unweighted or strictly positive weighted variants, directed attack-graph
variants, and practical heuristics that exploit the shortest-path and separator
structure of the approximation algorithm.

\bibliographystyle{plain}
\bibliography{main}

@book{FordFulkerson1962,
  author    = {Lester R. Ford and Delbert R. Fulkerson},
  title     = {Flows in Networks},
  publisher = {Princeton University Press},
  year      = {1962}
}

@book{AhujaMagnantiOrlin1993,
  author    = {Ravindra K. Ahuja and Thomas L. Magnanti and James B. Orlin},
  title     = {Network Flows: Theory, Algorithms, and Applications},
  publisher = {Prentice Hall},
  year      = {1993}
}

@article{DuanXu2014,
  author  = {Qi Duan and Jinhui Xu},
  title   = {On the Connectivity Preserving Minimum Cut Problem},
  journal = {Journal of Computer and System Sciences},
  volume  = {80},
  number  = {4},
  pages   = {837--848},
  year    = {2014},
  doi     = {10.1016/j.jcss.2014.01.003}
}

@article{FleischnerSabidussiSarvanov2010,
  author  = {Herbert Fleischner and Gert Sabidussi and Vladimir I. Sarvanov},
  title   = {Maximum Independent Sets in 3- and 4-Regular Hamiltonian Graphs},
  journal = {Discrete Mathematics},
  volume  = {310},
  number  = {20},
  pages   = {2742--2749},
  year    = {2010},
  doi     = {10.1016/j.disc.2010.05.028}
}

@inproceedings{GoemansHarveyIwataMirrokni2009,
  author    = {Michel X. Goemans and Nicholas J. A. Harvey and Satoru Iwata and Vahab S. Mirrokni},
  title     = {Approximating Submodular Functions Everywhere},
  booktitle = {Proceedings of the Twentieth Annual ACM-SIAM Symposium on Discrete Algorithms},
  series    = {SODA 2009},
  pages     = {535--544},
  year      = {2009},
  publisher = {SIAM},
  doi       = {10.1137/1.9781611973068.59}
}

@book{Fujishige2005,
  author    = {Satoru Fujishige},
  title     = {Submodular Functions and Optimization},
  edition   = {2},
  series    = {Annals of Discrete Mathematics},
  volume    = {58},
  publisher = {Elsevier},
  year      = {2005}
}

@article{DahlhausJohnsonPapadimitriouSeymourYannakakis1994,
  author  = {Elias Dahlhaus and David S. Johnson and Christos H. Papadimitriou and Paul D. Seymour and Mihalis Yannakakis},
  title   = {The Complexity of Multiterminal Cuts},
  journal = {SIAM Journal on Computing},
  volume  = {23},
  number  = {4},
  pages   = {864--894},
  year    = {1994},
  doi     = {10.1137/S0097539792225297}
}

@inproceedings{BateniHajiaghayiKleinMathieu2011,
  author    = {MohammadHossein Bateni and MohammadTaghi Hajiaghayi and Philip N. Klein and Claire Mathieu},
  title     = {A Polynomial-Time Approximation Scheme for Planar Multiway Cut},
  booktitle = {Proceedings of the Twenty-Second Annual ACM-SIAM Symposium on Discrete Algorithms},
  series    = {SODA 2011},
  pages     = {639--655},
  publisher = {SIAM},
  year      = {2011},
  doi       = {10.1137/1.9781611973082.50}
}

@article{FeigeHajiaghayiLee2008,
  author  = {Uriel Feige and MohammadTaghi Hajiaghayi and James R. Lee},
  title   = {Improved Approximation Algorithms for Minimum-Weight Vertex Separators},
  journal = {SIAM Journal on Computing},
  volume  = {38},
  number  = {2},
  pages   = {629--657},
  year    = {2008},
  doi     = {10.1137/05064299X}
}

@inproceedings{BoykovJolly2001,
  author    = {Yuri Y. Boykov and Marie-Pierre Jolly},
  title     = {Interactive Graph Cuts for Optimal Boundary and Region Segmentation of Objects in N-D Images},
  booktitle = {Proceedings of the Eighth IEEE International Conference on Computer Vision},
  series    = {ICCV 2001},
  volume    = {1},
  pages     = {105--112},
  year      = {2001},
  doi       = {10.1109/ICCV.2001.937505}
}

@article{BoykovKolmogorov2004,
  author  = {Yuri Boykov and Vladimir Kolmogorov},
  title   = {An Experimental Comparison of Min-Cut/Max-Flow Algorithms for Energy Minimization in Vision},
  journal = {IEEE Transactions on Pattern Analysis and Machine Intelligence},
  volume  = {26},
  number  = {9},
  pages   = {1124--1137},
  year    = {2004},
  doi     = {10.1109/TPAMI.2004.60}
}

@article{KoutrouliKaratzasPaezEspinoPavlopoulos2020,
  author  = {Mikaela Koutrouli and Evangelos Karatzas and David Paez-Espino and Georgios A. Pavlopoulos},
  title   = {A Guide to Conquer the Biological Network Era Using Graph Theory},
  journal = {Frontiers in Bioengineering and Biotechnology},
  volume  = {8},
  pages   = {34},
  year    = {2020},
  doi     = {10.3389/fbioe.2020.00034}
}

@inproceedings{MellHarang2015,
  author    = {Peter Mell and Richard Harang},
  title     = {Minimizing Attack Graph Data Structures},
  booktitle = {Proceedings of the Tenth International Conference on Software Engineering Advances},
  series    = {ICSEA 2015},
  pages     = {30--37},
  year      = {2015}
}

\end{document}